\providecommand{\tabularnewline}{\\}
\numberwithin{equation}{section}
\numberwithin{figure}{section}
\theoremstyle{plain}
\newtheorem{thm}{\protect\theoremname}
  \theoremstyle{plain}
  \newtheorem{lem}[thm]{\protect\lemmaname}
  \theoremstyle{plain}
  \newtheorem{cor}[thm]{\protect\corollaryname}
  \theoremstyle{definition}
  \newtheorem{example}[thm]{\protect\examplename}
\journal{Journal of Geometry and Physics}
\renewcommand{\textendash}{--}
  \providecommand{\corollaryname}{Corollary}
  \providecommand{\examplename}{Example}
  \providecommand{\lemmaname}{Lemma}
\providecommand{\theoremname}{Theorem}
\begin{document}
\begin{frontmatter}

\title{Topological and Geometric Obstructions on Einstein-Hilbert-Palatini
Theories}

\author{Yuri Ximenes Martins\corref{cor1}\fnref{label2}}
\author{Rodney Josu\'{e} Biezuner,\fnref{label2}}
\address[label2]{Departamento de Matem\'{a}tica, ICEx, Universidade Federal de Minas Gerais, Av. Ant\^{o}nio Carlos 6627, Pampulha, CP 702, CEP 31270-901, Belo Horizonte, MG, Brazil}
\begin{abstract}
In this article we introduce $A$-valued Einstein-Hilbert-Palatini
functional ($A$\textbf{-}EHP) over a $n$-manifold $M$, where $A$
is an arbitrary graded algebra, as a generalization of the functional
arising in the study of the first order formulation of gravity. We
show that if $A$ is weak $(k,s)$-solvable, then $A$-EHP is non-null
only if $n<k+s+3$. We prove that essentially all algebras modeling
classical geometries (except semi-Riemannian geometries with specific
signatures) satisfy this condition for $k=1$ and $s=2$, including
Hitchin's generalized complex geometry, Pantilie's generalized quaternionic
geometries and all other generalized Cayley-Dickson geometries. We
also prove that if $A$ is concrete in some sense, then a torsionless
version of $A$-EHP is non-null only if $M$ is K\"{a}hler of dimension
$n=2,4$. We present our results as obstructions to $M$ being an
Einstein manifold relative to geometries other than semi-Riemannian.
\end{abstract}
\begin{keyword}
Einstein-Hilbert-Palatini functional topological obstructions, geometric
Einstein-manifolds, weak $(k,s)$-solvable graded algebras. 
\end{keyword}
\end{frontmatter}

\section{Introduction}

The standard theory of gravity is General Relativity \citep{wald,hawking_ellis},
which is formulated in Lorentzian geometry: spacetime is regarded
as a Lorentzian manifold $(M,g)$ whose Lorentzian metric $g$ is
a critical point of the Einstein\textendash Hilbert action functional
\begin{equation} S_{EH}[g]=\int_{M}(R_{g}-2\Lambda)\cdot\omega_{g},\label{einstein_hilbert_action} \end{equation}defined
on the space of all possible Lorentzian metrics in $M$, $\Lambda\in\mathbb{R}$
is the cosmological constant and $R_{g}$ and $\omega_{g}$ are, respectively,
the scalar curvature and the volume form of $g$. The Euler-Lagrange
equations of $S_{EH}$ are the Einstein's equations
\begin{equation}
\operatorname{Ric}_{g}-\frac{1}{2}R_{g}g+\Lambda g=0,\label{einstein_eq}
\end{equation}
which are equivalent to saying that $(M,g)$ is an Einstein manifold,
i.e, $\operatorname{Ric}_{g}=\frac{2}{n-2}\Lambda g$. The problem
of determining which manifolds admit an Einstein structure is old
and a general classification of them remains an open problem \citep{einstein_manifolds_BESSE,review_einstein}. 

The action functional $S_{EH}$ can be considered not only in Lorentzian
metrics, but also in metrics of arbitrary signature. The Euler-Lagrange
equations have the same form and they are also equivalent to the Einstein
manifold condition. Despite the similarity, being Einstein depends
strongly on the signature (and, therefore, on the underlying geometry).
For instance, if $M$ is compact, orientable, non-parallelizable and
K\"{a}hler, then it does not admit any Lorentzian metric\footnote{In fact, K\"{a}hler condition requires $b_{i}(M)=0$ for $i$ odd,
so that $\chi(M)=\sum_{i}b_{2i}(M)$, which is positive, since $M$
is non-parallelizable. On the other hand, compact manifolds admit
Lorentzian metrics only if $\chi(M)=0$.}, while Calabi-Yau manifolds, hyperK\"{a}hler and quaternion K\"{a}hler
manifolds are the basic examples of Riemannian Einstein manifolds.

Due to the difficulty of classifying semi-Riemmanian Einstein-manifolds,
many generalizations and closely related concepts were introduced
and studied, such as generalized Einstein manifolds \citep{generalized_Einstein},
quasi-Einstein manifolds \citep{quasi_einstein}, generalized quasi-Einstein
manifolds \citep{generalized_quasi_Einstein}, super quasi-Einstein
manifolds \citep{super_quasi_generalized}, and many others. Each
of these generalizations concerns semi-Riemannian manifolds. Here
we will consider another generalization: the problem of finding Einstein
manifolds for geometries other than semi-Riemannian. Because semi-Riemannian
Einstein manifolds are critical points of the Einstein-Hilbert functional,
the idea is to define Einstein manifolds relative to other geometries
as critical points of functionals which are analogues of $S_{EH}$
in those geometries. The classical examples of geometries can be described
by tensors $t$ on $M$ fulfilling additional integrability conditions.
Considering connections on $M$ compatible with $t$ (in the sense
that $\nabla t=0$) we can take their curvature tensor and Ricci curvature.
In order to write down the analogous action we need to contract the
Ricci curvature with the tensor $t$, getting the \emph{scalar curvature
}$R_{t}$ of $t$. If $M$ is oriented we can define an action functional
$S[t]=\int(R_{t}-2\Lambda)\cdot\omega$ on the moduli space of those
tensors. This makes sense only when $t$ has rank two. For example,
we can write a direct analogue of General Relativity in symplectic
geometry, with action functional $S[\omega]=\int_{M}(\pi^{ij}\operatorname{Ric}_{ij}-2\Lambda)\cdot\omega^{n}$
on a $2n$-dimensional manifold, where $\pi^{ij}$ are the components
of the bivector induced by $\omega$. But the tensor $\operatorname{Ric}_{ij}$
is always symmetric, so that $\pi^{ij}\operatorname{Ric}_{ij}=0$
and the action therefore is trivial \citep{symplectic_curvature_1,symplectic_curvature_2}.
Also, there is no obvious analogue of $S_{EH}$ (and, consequently,
of Einstein manifolds) relative to K\"{a}hler, quaternion K\"{a}hler
and hyper-K\"{a}hler geometry. This fact should not be viewed as
a contradiction to the above cited fact that Calabi-Yau manifolds,
quaternion K\"{a}hler and hyper-K\"{a}hler manifolds are Riemmanian\textbf{
}Einstein manifolds nor to the applications of those geometries in
classical (i.e, semi-Riemannian) General Relativity, specially in
the canonical formalism \citep{ashteckar,complex_GR}.

We recall, however, that Einstein's equations can be obtained as critical
points of another functional, usually known as the\emph{ Einstein\textendash Hilbert\textendash Palatini}
(EHP) \emph{action} \emph{functional}, which arises in the so-called
\emph{first order formulation of gravity} (also called\emph{ tetradic
gravity}). Instead of a semi-Riemannian metric, this functional is
defined on the space of reductive Cartan connections on the frame
bundle $FM\rightarrow M$, relative to the group reduction $O(r,s)\hookrightarrow\mathbb{R}^{r,s}\rtimes O(r,s)$.
These can be identified with pairs of $1$-forms $e$ and $\omega$
in $FM$, called \emph{tetrad} and \emph{spin connection}, with values
in $\mathbb{R}^{n-1,1}$ and $\mathfrak{o}(n-1,1)$, respectively.
It is usual to assume that $e$ is pointwise an isomorphism. The action
itself is given by \begin{equation} S_{EHP}[e,\omega]=\int_{M}\operatorname{tr}(\curlywedge_{\rtimes}^{n-2}e\curlywedge_{\rtimes}\Omega+\frac{\Lambda}{(n-1)!}\curlywedge_{\rtimes}^ne),\label{EHP_action} \end{equation}
where $\curlywedge_{\rtimes}$ is a type of ``wedge product\footnote{In this paper we will work with many different types of wedge products,
satisfying very different properties. Therefore, in order to avoid
confusion, we will not follow the literature, but introduce specific
symbols for each of them.}'' induced by matrix multiplication in $O(r,s)$, $\curlywedge_{\rtimes}^{k}\alpha=\alpha\curlywedge_{\rtimes}...\curlywedge_{\rtimes}\alpha$
and $\Omega=d\omega+\omega\curlywedge_{\rtimes}\omega$ is the curvature
of $\omega$. It is a well-known fact \citep{baez_knots,PhD_cartan_connections}
that in dimension $n=4$ and Lorentzian signature varying (\ref{EHP_action})
in relation to $e$ and $\omega$ we get 
\begin{equation}
e\curlywedge_{\rtimes}\Omega+\frac{\Lambda}{(n-1)!}e\curlywedge_{\rtimes}e\curlywedge_{\rtimes}e=0\quad\text{and}\quad e\curlywedge_{\rtimes}\Theta=0,\label{eintein_palatini_equations}
\end{equation}
where $\Theta=d\omega+\omega\curlywedge_{\rtimes}e$ is the \emph{torsion
of $\omega$. }Lorentz metrics on $M$ can be identified as pullbacks
of the canonical Minkowski metric on $\mathbb{R}^{r,s}$ via tetrad
and the connections $\omega$ are compatible with them. Furthermore,
since $e$ is pointwise an isomorphism, the second equation in (\ref{eintein_palatini_equations})
is equivalent to the torsion-free condition $\Theta=0$, so that $\omega$
is necessarily a Levi-Civita connection. Via these identifications,
the first equation in (\ref{eintein_palatini_equations}) is just
Einstein's equation. Therefore, a $4$-dimensional Einstein manifold
can also be regarded as a critical point (relative to $e$) of $S_{EHP}$. 

The advantage of this new approach is that unlike $S_{EH}$, the EHP
functional $S_{EHP}$ (and, consequently, the notion of Einstein manifold)
makes sense relative to any geometry. In fact, recall that tensor
geometries on $M$ are equivalent to reductions on the structural
group of $FM$, while connections compatible with tensors are equivalent
to Cartan connections for the corresponding group reductions. The
difference from the previous problematic approach is that, while $S_{EH}$
involves scalar curvature (which makes canonical sense only for particular
tensorial geometries), $S_{EHP}$ contains only the curvature form,
which can be defined for arbitrary Cartan connections. Not only this,
EHP theories can be defined in a purely algebraic sense: given an
$R$-algebra $A$ endowed with a vector decomposition $A\simeq A_{0}\oplus A_{1}$,
we can define a \emph{reductive $A$-connection} in a bundle $P\rightarrow M$
as a pair of $1$-forms $e:TP\rightarrow A_{0}$ and $\omega:TP\rightarrow A_{1}$.
The corresponding \emph{$A$-valued EHP theory} ($A$-EHP) is given
by the functional\footnote{In the way it is written, it only makes sense when the algebra $\Lambda(P;A)$
satisfies some associativity-like polynomial identity. We will see,
nonetheless, that it is possible to define $S_{EHP}$ for arbitrary
algebras.} 
\begin{equation}
S_{EHP}[e,\omega]=\int_{M}(\wedge_{*}^{n-2}e\wedge_{*}\Omega+\frac{\Lambda}{(n-1)!}\wedge_{*}^{n}e),\label{EHP}
\end{equation}
where $\wedge_{*}$ is the wedge product induced by the multiplication
$*:A\otimes A\rightarrow A$ and $\Omega=d\omega+\omega\wedge_{*}\omega$
is the \emph{field strength }(or \emph{curvature} of $\omega$). An
\emph{$A$-valued Einstein manifold} is then defined as a manifold
$M$ endowed with an $A$-connection $A=e+\omega$ fulfilling the
Euler-Lagrange equations of (\ref{EHP}). This definition can be enlarged
even more to include graded geometries, where $A$ is a graded algebra
and $P\rightarrow M$ is a graded bundle over a non-graded manifold.

Since EHP theories make sense in arbitrary geometries defined by abstract
algebras, we can ask if for each of those geometries the corresponding
theory is nontrivial in some sense, implying the existence of $A$-valued
Einstein manifolds. The minimal requirement is, of course, that the
functional must be non-null (recall that the symplectic version of
$S_{EH}$ vanishes identically, so that it is expected that a similar
situation will happen in EHP). In this article we give general obstruction
results for the minimal nontriviality of $A$-EHP (and, therefore,
for the existence of $A$-valued Einstein manifolds) in terms of properties
of the underlying algebra. We will prove two main theorems, which
are different in nature. The first of them applies to concrete and
abstract EHP functionals and gives obstructions on the dimension of
the base manifold, while the second applies only to concrete geometries,
but gives obstructions on the dimension and on the topology of the
base manifold. \vspace{0.2cm}

\noindent \textbf{Theorem A.} \emph{Let $P\rightarrow M$ be a graded
bundle over a $n$-dimensional manifold and $A\simeq\oplus_{m}A^{m}$
a graded $R$-algebra endowed with a $R$-module decomposition $A_{0}\oplus A_{1}$
such that each $A_{0}\cap A^m$ is a weak $(k_m,s_m)$-solv submodule
of $A_{0}$. Let $(k,s)$ be $\min_{m}(k_{m},s_{m})$. If $n\geq k+s+1$,
then the $A$-EHP action functional is homogeneous (i.e, $\Lambda=0$).
If $n\geq k+s+3$, then it is trivial.}\vspace{0.2cm}

\noindent \textbf{Theorem B. }\emph{Let $M$ be an $n$-dimensional
Berger manifold endowed with an $H$-structure. If there are natural
numbers $k_{1}+k_{r}=n$ such that $\mathfrak{h}\subset\mathfrak{so}(k_{1})\oplus...\oplus\mathfrak{so}(k_{r})$
properly, then the torsionless EHP functional is nontrivial only if
$n=2,4$ and $M$ is K\"{a}hler. In particular, if $M$ is compact
and $H^{2}(M;\mathbb{C})=0$, then it must be a K3-surface}.\vspace{0.2cm}

This paper is organized as follows: in Section \ref{sec_th_A} we
study the solvability conditions appearing in Theorem A and this theorem
is stated in a more general context and proved. In Section \ref{sec_th_B}
we recall some facts concerning algebra extensions and Theorem B is
stated also in a more general context and proved. In Section \ref{examples}
many examples are given, where by an example we mean a specific algebraic
context fulfilling the solvability hypothesis of Theorem A, so that
the corresponding EHP theory cannot be realized in the underlying
geometry. 

\section{Theorem A \label{sec_th_A}}

Let $(A,*)$ be an $R$-algebra, where $R$ is a commutative ring.
Given a smooth manifold $P$, let $\Lambda(P;A)$ denote the $\mathbb{N}$-graded
algebra of $A$-valued forms on $P$ with the wedge product $\wedge_{*}$
induced by $*$. An \emph{$A$-valued reductive connection} (or simply
$A$-connection) on a smooth bundle $P\rightarrow M$, relative to
a $R$-module decomposition $A\simeq A_{0}\oplus A_{1}$, is a pair
$\nabla=e+\omega$ of $1$-forms $e:TP\rightarrow A_{0}$ and $\omega:TP\rightarrow A_{1}$.
Therefore, the space of $A$-connections is given by 
\[
\mathrm{Conn}_{A}(P)=\Lambda^{1}(P;A_{0})\oplus\Lambda^{1}(P;A_{1}).
\]
The \emph{pure field strength }(or\emph{ curvature}) and the \emph{torsion}
of an $A$-connection $\nabla$ are respectively defined as the $A$-valued
$2$-forms $\Omega=d\omega+\omega\wedge_{*}\omega$ and $\Theta=de+\omega\wedge_{*}e$.
If $(A,*)$ is associative, then given $\Lambda\in\mathbb{R}$ there
is no ambiguity in considering the \emph{Hilbert-Palatini form} 
\[
\alpha_{n,\Lambda}=\wedge_{*}^{n-2}e\wedge_{*}\Omega+\frac{\Lambda}{(n-1)!}\wedge_{*}^{n}.
\]
In this case, fixed any FABS $\jmath$ and any trace transformation
$\mathrm{tr}$ (see Appendix, p. \pageref{appendix}) we have a natural
map $\mathrm{tr}\circ\jmath:\Lambda(P;A)\rightarrow\Lambda(M;\mathbb{R})$
preserving the $\mathbb{N}$-grading, allowing us to define the \emph{$A$-valued
EHP action functional} in $P\rightarrow M$ as the functional (if
$\Lambda=0$ we say that this is the homogeneous $A$-valued EHP functional)
\[
S_{n,\Lambda}:\mathrm{Conn}_{A}(P)\rightarrow\mathbb{R}\quad\text{such that}\quad S_{n,\Lambda}[e,\omega]=\int_{M}\mathrm{tr}(\jmath(\alpha_{n,\Lambda})).
\]

But, if $(A,*)$ does not satisfy associative-like polynomial identities
(PIs), in order to define Hilbert-Palatini forms and, consequently,
$A$-EHP theories, we need to work with $A$ up to those associative-like
PIs. More precisely, let us consider the associativity-like polynomials
\[
(x_{1}\cdot...\cdot(x_{s-2}\cdot(x_{s-1}\cdot x_{s}))),\quad(((x_{1}\cdot x_{2})\cdot(x_{3}\cdot x_{4}))\cdot...\cdot x_{s}),\quad\text{etc}.,
\]
Due to the structure of the Hilbert-Palatini forms, we are interested
in the more specific polynomials 
\begin{equation}
(x\cdot...\cdot(x\cdot(x\cdot y))),\quad\text{etc.}\quad\text{and}\quad(x\cdot...\cdot(x\cdot(x\cdot x))),\quad\text{etc}.\label{p_s}
\end{equation}
When evaluated at $\Lambda(P;A)$, the set of these polynomials generates
an ideal $\mathfrak{I}(P,A)$ and we can take the quotient $\Lambda(P;A)/\mathfrak{I}(P,A)$,
whose quotient map we denote by $\pi$. Fixing a FABS invariant under
the ideal $\mathfrak{I}(P,A)$ (see Appendix, p. \pageref{appendix})
we can now consider $S_{n,\Lambda}$ for arbitrary algebras: 
\begin{equation}
S_{n,\Lambda}[e,\omega]=\int_{M}\mathrm{tr}(\jmath(\pi(\alpha_{n,\Lambda}))).\label{algebraic_EHP}
\end{equation}

Suppose now that (as an $R$-module) $A$ is also $\mathfrak{m}$-graded
for some abelian group $(\mathfrak{m},+)$, meaning that $A\simeq\oplus_{m}A^{m}$
with $m\in\mathfrak{m}$. Such a decomposition induces a decomposition
in each $A_{i}$, with $i=0,1$, by $A_{i}\simeq\oplus_{m}A_{i}^{m}$,
where $A_{i}^{m}=A_{i}\cap A^{m}$. Suppose that $P\rightarrow M$
is also $\mathfrak{m}$-graded, in that $TP\simeq\oplus_{m}E^{m}$
for vector bundles $E^{m}\rightarrow M$. In this case, the previous
notions can also be generalized to the graded context. Indeed, given
$l\in\mathfrak{m}$ we define an \emph{$A$-connection of degree $l$}
\emph{in $P$} as a pair $\nabla=e+\omega$ of $\mathfrak{m}$-graded
$A$-valued $1$-forms of degree $l$, i.e, as a sequence of $1$-forms
$e^{m}:E^{m}\rightarrow A_{0}^{m+l}$ and $\omega^{m}:E^{m}\rightarrow A_{1}^{m+l}$.
The space of those connections is then 
\[
\mathrm{Conn}_{A}^{l}(P)=\oplus_{m}(\Lambda^{1}(E^{m};A_{0}^{m+l})\oplus\Lambda^{1}(E^{m};A_{1}^{m+l})).
\]

The curvature and the torsion are defined analogously, being determined
by the curvature and the torsion of each connection  $\nabla^{m}=e^{m}+\omega^{m}$.
As discussed in the Appendix, the notions of FABS and trace transformation
can be internalized into the category of graded principal bundles
(in the sense above), allowing us to define (for each graded-FABS
and each graded-trace transformations) the\emph{ $\mathfrak{m}$-graded
$A$-EHP functional of degree $l$ in $P\rightarrow M$} by the same
expression as (\ref{algebraic_EHP}), but now with domain $\mathrm{Conn}_{A}^{l}(P)$.

We will restate Theorem A considering the following more general conditions
on the $\mathfrak{m}$-graded algebra $A$ endowed with the a $R$-module
decomposition $A\simeq A_{0}\oplus A_{1}$:
\begin{enumerate}
\item[(G1)] $A_{0}$ is indeed a subalgebra and each $A_{0}^{j}=A^{i}\cap A_{0}$
is a $(k_{j},s_{j})$-weak solvable submodule of $A_{0}$.
\item[(G2)]  each $A^{j}$ is a $(k_{j},s_{j})$-weak solvable submodule of $A$.
\end{enumerate}
Before restating and proving Theorem A, let us explain what we mean
by a $(k,s)$-weak solvable submodule of an $R$-algebra $A$. We
say that an algebra $(A,*)$ is \emph{$(k,s)$-nil}, with $k,s\geq0$
if (for any smooth manifold $P$) every $A$-valued $k$-form in $P$
has nilpotency degree $s$, i.e, if for every $\alpha\in\Lambda^{k}(P;A)$
we have $\wedge_{*}^{s}\alpha\ne0$, but $\wedge_{*}^{s+1}\alpha=0$.
The $(k,s)$\emph{-solv} algebras are those that can be decomposed
as finite sums of $(k,s)$-nil algebras. We say that a submodule $V\subset A$
is a \emph{$(k,s)$-nil submodule of }$A$ if every $V$-valued $k$-form
has nilpotency degree $s$ when regarded as a $A$-valued form. Similarly,
we say that $V$ is a \emph{$(k,s)$-solv} \emph{submodule }if it
decomposes as a sum of $(k,s)$-nil submodules. When $A=\oplus_{m}A^{m}$
is $\mathfrak{m}$-graded, there are other kind of submodules $V\subset A$
that can be introduced. For instance, we say that $V$ is \emph{graded
$(k,s)$-solv} if each $V_{m}=V\cap A_{m}$ is a $(k,s)$-solv submodule.
In any submodule $V$ of a $\mathfrak{m}$-graded algebra we get a
corresponding grading by $V\simeq\oplus_{m}V_{m}$. So, any $V$-valued
form $\alpha$ can be written as $\alpha=\sum_{m}\alpha^{m}$. We
say that $V$ is \emph{weak $(k,s)$-nilpotent} if for every $k$-form
$\alpha$, any polynomial $p_{s+1}(\alpha^{m_{1}},...,\alpha^{s+1})$
vanishes. Similarly, we say that $V$ is \emph{weak $(k,s)$-solvable
}if it decomposes as a sum of weak $(k,s)$-nilpotent\emph{ }submodules.

Examples will be given in Section \ref{examples}. Just to mention,
if an algebra $A$ is nilpotent (resp. solvable) of degree $s$, then
it is $(1,s)$-nil (resp. $(1,s)$-solv). Similarly, degree $s$ nilpotent
(resp. solvable) subalgebras of any algebra are $(k,s)$-nil (resp.
$(k,s)$-solv) submodules.

\section*{Restatement and Proof}

Now we can give a more precise statement for Theorem A (in the formulation
below it is more general than that present in the Introduction).
\begin{thm}
\label{theorem_G1_2}Let $M$ be an $n$-dimensional manifold, $P\rightarrow M$
be a bundle such that $TP$ is $\mathfrak{m}$-graded and $(A,*)$
be a $\mathfrak{m}$-graded algebra endowed with a $R$-module decomposition
$A_{0}\oplus A_{1}$. Given $l\geq0$, assume that $A[-l]$ satisfies
condition\emph{ (G1)} or \emph{(G2)}. Furthermore, let $(k,s)$ be
the minimum of $(k_{j},s_{j})$. If $n\geq k+s+1$, then for any compatible
graded FABS and any trace transformation, the corresponding graded
$A$-EHP of degree $l$ equals the homogeneous ones. If $n\geq k+s+3$,
then $A$-EHP is trivial. 
\end{thm}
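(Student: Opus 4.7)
The plan is to work inside the quotient algebra $\Lambda(P;A)/\mathfrak{I}(P,A)$, where the Hilbert--Palatini form
\[
\alpha_{n,\Lambda}=\wedge_{*}^{n-2}e\wedge_{*}\Omega+\tfrac{\Lambda}{(n-1)!}\wedge_{*}^{n}e
\]
is unambiguously defined because the associativity-like polynomials in (\ref{p_s}) are identified there. Since the graded FABS is assumed invariant under $\mathfrak{I}$, the $A$-EHP action factors through $\pi(\alpha_{n,\Lambda})$, so it suffices to show that $\pi(\wedge_{*}^{n}e)=0$ whenever $n\ge k+s+1$ (giving homogeneity, since only the $\Lambda$-term is killed) and, in addition, that $\pi(\wedge_{*}^{n-2}e\wedge_{*}\Omega)=0$ whenever $n\ge k+s+3$ (giving full triviality).

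The first step is to expand $e$ and $\omega$ along the $\mathfrak{m}$-grading shifted by $l$: $e=\sum_{m}e^{m}$ with $e^{m}:E^{m}\to A_{0}^{m+l}$, and similarly for $\omega$ (under (G2) the values of $e$ range over all of $A^{m+l}$, but the argument is essentially the same). Under (G1) or (G2), each $A_{0}^{m+l}$ (resp.\ $A^{m+l}$) decomposes as a finite sum $\sum_{i}V_{i,m+l}$ of weak $(k_{m+l},s_{m+l})$-nilpotent submodules, so each $e^{m}$ further splits as $\sum_{i}e^{m}_{(i)}$ with $e^{m}_{(i)}$ valued in $V_{i,m+l}$. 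Multilinear expansion together with the right-associative reduction available in the quotient then recasts each summand of $\pi(\wedge_{*}^{N}e)$ as an evaluation of one of the polynomials $p_{N}$ from (\ref{p_s}) at tuples of such graded-nilpotent pieces; the same applies to $\pi(\wedge_{*}^{N-1}e\wedge_{*}\Omega)$, with $\Omega$ acting as an additional $2$-form factor.

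The heart of the argument is then the weak-nilpotency hypothesis: for any form with values in a weak $(k_{j},s_{j})$-nilpotent submodule, every evaluation of $p_{s_{j}+1}$ on $s_{j}+1$ of its graded components vanishes. Setting $(k,s)=\min_{j}(k_{j},s_{j})$, the recursion $p_{N}(x_{1},\ldots,x_{N})=x_{1}\cdot p_{N-1}(x_{2},\ldots,x_{N})$ lets one peel off factors until a trailing block of $s+1$ consecutive entries lies in a common $V_{i,j}$, forcing the whole expression to vanish; the form-degree $k$ of each factor is what raises the threshold from $s+1$ to $k+s+1$, and the two extra form-degrees supplied by $\Omega$ account for the further jump to $k+s+3$. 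The main obstacle, I expect, is exactly this combinatorial step: one must verify that the right-associative reordering in the quotient invariably produces an $(s+1)$-block within a single nilpotent summand, and that the threshold coming from the minimum $(k_{j},s_{j})$ genuinely controls \emph{every} monomial in the expansion --- not just those whose factors all happen to lie in one graded piece. Handling the mixed contributions under (G2), where $\omega$ and $e$ freely interact, is where I expect the bookkeeping to be most delicate, and it is presumably also where the compatibility of the graded FABS and trace transformation with the ideal $\mathfrak{I}$ enters in an essential way.
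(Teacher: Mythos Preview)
Your overall strategy matches the paper's: work in the quotient $\Lambda(P;A)/\mathfrak{I}(P,A)$, decompose $e=\sum_{m,i}e^{m}_{(i)}$ along the $\mathfrak{m}$-grading and the weak-nilpotent summands, kill the two pieces of $\alpha_{n,\Lambda}$ separately, and reduce degree $l$ to degree $0$ via the shift $A[-l]$. That skeleton is exactly what the paper does.

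Where you diverge is in the vanishing mechanism, and this is a genuine misreading of the hypothesis. Weak $(k,s)$-nilpotency of $V$ says that for a $V$-valued $k$-form $\alpha$ with graded components $\alpha^{m}$, \emph{every} polynomial $p_{s+1}(\alpha^{m_{1}},\ldots,\alpha^{m_{s+1}})$ vanishes, for \emph{arbitrary} tuples $(m_{1},\ldots,m_{s+1})$ --- not just for tuples lying inside a single $V_{i,j}$. So there is no need to ``peel off factors until a trailing block of $s+1$ consecutive entries lies in a common $V_{i,j}$''; the mixed products already vanish by definition, and the pigeonhole/block argument you flag as the main obstacle is simply not needed. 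The paper just writes $\wedge_{*}^{k+s+1}e$ as a sum of such $p$-evaluations and concludes it is zero outright.

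Your account of the threshold shift is also off. The jump from $k+s+1$ to $k+s+3$ is not produced by ``the two extra form-degrees supplied by $\Omega$'' entering the nilpotency count. Once $\wedge_{*}^{k+s+1}e=0$ holds in the quotient, one merely factors: the cosmological term $\wedge_{*}^{n}e$ has $n$ copies of $e$, hence vanishes for $n\ge k+s+1$; the curvature term $\wedge_{*}^{n-2}e\wedge_{*}\Omega$ has only $n-2$ copies of $e$, hence vanishes for $n-2\ge k+s+1$, i.e.\ $n\ge k+s+3$. The curvature $\Omega$ is inert for the nilpotency argument --- it only lowers the $e$-count by two. This also dissolves your worry about mixed contributions under (G2): there the entire $A^{j}$ is weak-solvable, so the same direct vanishing applies uniformly, with no extra bookkeeping.
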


We will divide the proof in two cases:
\begin{enumerate}
\item when only the algebra $A$ is graded;
\item when both $A$ and $P$ are graded and the theory may have arbitrary
degree.
\end{enumerate}
\begin{proof}[Proof of first case.]
The proofs considering (G1) or (G2) are very similar, so we will
only explain the (G1) case. Since everything in the EHP action is
linear, grading-preserving and invariant under the action of $\mathfrak{I}(P;A)$,
it is enough to prove that $\alpha_{n,\Lambda}=0$ for some representative
Einstein-Hilbert form in the quotient $\Lambda(P;A)/\mathfrak{I}(P;A)$.
We choose
\[
\alpha_{n,\Lambda}=\wedge_{*}^{n-2}e\wedge_{*}\Omega+\frac{\Lambda}{(n-1)!}\wedge_{*}^{n}e.
\]
Under the hypothesis, we can locally write $e=\sum_{m}e^{m}$ with
$e^{m}:TP\rightarrow A_{0}^{m}$. From condition (G1) each $A_{0}^{m}$
is a weak $(k_{m},s_{m})$-solvable subspace, so that it writes as
a sum $A_{0}^{m}=\oplus_{i}V_{i}^{m}$ of weak $(k_{m},s_{m})$-nilpotent
subspaces, which means that we can write $e^{m}=\sum_{i}e_{i}^{m}$
locally and, therefore, $e=\sum_{m}\sum_{i}e_{i}^{m}$. Consequently,
for every $l$ we have $\wedge_{*}^{l}e=p_{l}(e_{i_{1}}^{m_{1}},...,e_{i_{l}}^{m_{l}})$
for some polynomial of degree $l$. If we now consider the minimum
$(k,s)$ (over $m$) of $(k_{m},s_{m})$, the fact that each $A_{0}^{m}$
is weakly $(k_{m},s_{m})$-nilpotent then implies $\wedge_{*}^{k+s+1}e=0$.
Consequently, 
\begin{equation}
(\wedge_{*}^{k+s+1}e)\wedge_{*}\alpha=0\label{obstruction_1}
\end{equation}
for any $A$-valued form $\alpha$. In particular, for 
\[
\alpha=\frac{\Lambda}{(n-1)!}\wedge_{*}^{n-(k+s+1)}e
\]
(\ref{obstruction_1}) is precisely the inhomogeneous part of $\alpha_{\Lambda,n}$.
Therefore, for every $n\geq k+s+1$ we have $\alpha_{n,\Lambda}=\alpha_{n}$
and consequently $\pi(\alpha_{n})=\pi(\alpha_{n,\Lambda})$, implying
$S_{n}[e,\omega]=S_{n,\Lambda}[e,\omega]$ for any $A$-valued connection,
and thus $S_{n}=S_{n,\Lambda}$. On the other hand, for
\[
\alpha=(\wedge_{*}^{n-(k+s+1)+2}e)\wedge_{*}\Omega
\]
we see that (\ref{obstruction_1}) becomes exactly $\alpha_{n}$.
Therefore, when $n\geq k+s+3$ we have $\alpha_{n}=0$, implying (because
$\jmath$ is linear) $\jmath(\alpha_{n})=0$ and thus $S_{n}=0$.
But $k+s+3>k+s+1$, so that $S_{n,\Lambda}=0$ too, ending this first
case.

$\;$

\noindent \emph{Proof of second case}. Notice that to give a morphism
$f:A'\rightarrow A$ of degree $l$ between two graded algebras is
the same as giving a zero degree morphism $f:A'\rightarrow A[-l]$,
where $A[-l]$ is the graded algebra obtained by shifting $A$. Consequently,
the obstructions of a degree $l$ graded $A$-EHP are just the obstructions
of degree zero graded $A[-l]$-EHP, so that it is enough to analyze
theories of degree zero. If we assume that $TP$ is a graded bundle,
so that $TP\simeq\oplus_{m}E^{m}$, and if $e$ has degree zero, then
the only change in comparison to the previous ``partially graded''
context is that instead of decomposing $e$ as a sum $\sum_{m}e^{m}$,
we can now write it as a genuine direct sum $e=\oplus_{m}e^{m}$,
with $e^{m}:E^{m}\rightarrow A_{0}\cap A^{m}$, meaning that the same
argumentation of the case case applies here, ending the proof. 
\end{proof}

\section{Theorem B \label{sec_th_B}}

Until now we worked in full generality and we showed that action functionals
defined on spaces of algebra valued forms suffer minimal obstructions
(in terms of the properties of the algebra) which affect the possible
dimensions of the base manifold. Here we will show that if we restrict
ourselves to certain concrete situations the base manifold suffers
not only those minimal obstructions, but also strong topological obstructions.

Recall that given algebras $E$, $A$ and $H$, we say that $E$ is
an \emph{extension of $A$ by $H$} when they fit into a short exact
sequence (as $R$-modules):$$
\xymatrix{ 0 \ar[r] & H \ar[r]^{\imath} & E \ar[r]^{\pi} & A \ar[r] & 0}
$$ Furthermore, the extension is called \emph{separable} if $\pi$ has
a section, i.e, a $R$-linear map $s:A\rightarrow E$ such that $\pi\circ s=id_{A}$.
In this case we can use $s$ to pushforward the product of $A$ to
$E$, as below:\begin{equation}{
\label{pullback_product}
\xymatrix{ \ar@/_{0.5cm}/[rrr]_-{*'} E\otimes E \ar[r]^-{\pi\otimes \pi} & A\otimes A \ar[r]^-{*} & A \ar[r]^-s & E }}
\end{equation}

Concerning this new product, all that we need is the following result:
\begin{lem}
\label{pullback_lemma}The algebra $(A,*)$ is $(k,s)$-nil iff $(E,*')$
is $(k,s)$-nil.
\end{lem}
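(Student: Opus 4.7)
The strategy is to observe that both $\pi$ and $s$ in the extension (\ref{pullback_product}) are algebra homomorphisms between $(E,*')$ and $(A,*)$, and then to transfer the nilpotency degree back and forth through them. The definition $*'=s\circ *\circ(\pi\otimes\pi)$ combined with $\pi\circ s=\mathrm{id}_{A}$ gives at once $\pi(a*'b)=\pi(a)*\pi(b)$ and $s(a*b)=s(a)*'s(b)$. Both maps are $R$-linear, so postcomposition yields grading-preserving algebra homomorphisms $\Lambda(P;E)\rightleftarrows\Lambda(P;A)$ satisfying $\pi\circ(s\circ\beta)=\beta$ for every $\beta\in\Lambda(P;A)$.

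The heart of the argument is then the identity
\[
\wedge_{*'}^{n}\alpha \;=\; s\circ\bigl(\wedge_{*}^{n}(\pi\circ\alpha)\bigr),\qquad n\geq 2,
\]
valid for every $\alpha\in\Lambda^{k}(P;E)$. The base case $n=2$ is immediate from the explicit formula for $*'$ and $R$-linearity of $s$. For the inductive step I would take the right-associated bracketing from (\ref{p_s}): writing $\wedge_{*'}^{n+1}\alpha=\alpha\wedge_{*'}(\wedge_{*'}^{n}\alpha)$, the inductive hypothesis places the inner factor in the image of $s\circ(-)$, and collapsing $\pi\circ s=\mathrm{id}_{A}$ inside the outer $*'$ product yields $s\circ(\pi\circ\alpha\wedge_{*}\wedge_{*}^{n}(\pi\circ\alpha))=s\circ\wedge_{*}^{n+1}(\pi\circ\alpha)$. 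Because $\pi\circ(-)$ and $s\circ(-)$ respect every multilinear operation built from $*$ and $*'$, the same conclusion holds for each of the associative-like bracketings generating $\mathfrak{I}(P,A)$.

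From this identity both directions of the lemma fall out at once. Forward: if $\wedge_{*}^{s+1}\beta=0$ for every $\beta\in\Lambda^{k}(P;A)$, then $\wedge_{*'}^{s+1}\alpha=s\circ 0=0$ for every $\alpha$; and given a witness $\beta_{0}$ with $\wedge_{*}^{s}\beta_{0}\neq 0$, the form $\alpha_{0}:=s\circ\beta_{0}$ satisfies $\wedge_{*'}^{s}\alpha_{0}=s\circ\wedge_{*}^{s}\beta_{0}$, nonzero because $s$ is injective. Reverse: given $\beta\in\Lambda^{k}(P;A)$, set $\alpha=s\circ\beta$ and read off $\wedge_{*}^{s+1}\beta=0$ from $s\circ\wedge_{*}^{s+1}\beta=\wedge_{*'}^{s+1}\alpha=0$ by injectivity; and any $\alpha_{0}$ with $\wedge_{*'}^{s}\alpha_{0}\neq 0$ forces $\wedge_{*}^{s}(\pi\circ\alpha_{0})\neq 0$. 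The only point I expect to require care is the bracketing-independence of the key identity in the non-associative setting, but this is essentially forced by $\pi$ and $s$ being honest algebra homomorphisms and needs only a routine induction over the binary-tree structure of each polynomial in (\ref{p_s}).
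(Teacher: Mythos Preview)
Your proof is correct and follows essentially the same route as the paper: both arguments rest on the identity $\wedge_{*'}^{n}\alpha = s\circ\bigl(\wedge_{*}^{n}(\pi\circ\alpha)\bigr)$, which the paper packages as a commutative diagram coming from the functoriality of $\Lambda(P;-)$ applied to the maps $\pi$ and $s$, while you derive it by direct induction using that $\pi$ and $s$ are algebra homomorphisms. Your treatment is in fact a bit more explicit, handling both implications uniformly through the single key identity and carefully tracking the existence witnesses for the ``$\wedge^{s}\neq 0$'' clause.
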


\begin{proof}
Relative to the new product $A$ is clearly a subalgebra of $E$.
Therefore, $A$ is $(k,s)$-nil only if $(E,*')$ is. In order to
get the reciprocal, recall that any algebra $A$ induces a corresponding
graded-algebra structure in $\Lambda(P;A)$. Due to the functoriality
of $\Lambda(P;-)$, we then get the commutative diagram below, where
the horizontal rows are just (\ref{pullback_product}) for the algebras
$(E,*)$ and $(A,*')$, composed with the diagonal map. The commutativity
of this diagram says just that $\wedge_{*'}^{2}\alpha=\wedge_{*}^{2}f(\alpha)$
for every $\alpha\in\Lambda(P;A)$. From the same construction we
get, for each given $s\geq2$, a commutative diagram that implies
$\wedge_{*'}^{s}\alpha=\wedge_{*}^{s}f(\alpha)$. Therefore, if $(A,*)$
is $(k,s)$-nil it immediately follows that $(E,*')$ is $(k,s)$-nil
too. $$
\xymatrix{\ar[d]<0.2cm>^{\Lambda \pi} \Lambda (P;E) \ar[r]^-{\Delta} & \ar[d]<0.2cm>^{(\Lambda \pi) \otimes (\Lambda \pi)} \Lambda (P;E)\otimes \Lambda (P;E) \ar[r] & \ar[d]<0.2cm>^{\Lambda (\pi\otimes \pi)} \Lambda (P;E\otimes E) \ar[r] & \ar[d]<0.2cm>^{\Lambda \pi} \Lambda (P;E) \\
\ar[u]<0.2cm>^{\Lambda s} \Lambda (P;A) \ar[r]^-{\Delta} & \Lambda (P;A)\otimes \Lambda (P;A) \ar[r] \ar[u]<0.2cm>^{(\Lambda s) \otimes (\Lambda s)} & \Lambda (P;A\otimes A) \ar[r] \ar[u]<0.2cm>^{\Lambda (s\otimes s)} & \ar[u]<0.2cm>^{\Lambda s} \Lambda (P;A)}
$$
\end{proof}
In the following we will be interested in:
\begin{enumerate}
\item \emph{$k$-algebras}. These are algebras $(E,*')$ arising as splitting
extensions of real matrix algebras $(A,*)$ which are associative
subalgebras of $\mathfrak{so}(k_{1})\oplus...\oplus\mathfrak{so}(k_{r})$
for some $k_{1},..,k_{r}$. We notice that $\mathfrak{so}(k)$ is
$(1,2)$-nil for any $k$. It then follows that $A$ (and therefore
$E$, due to lemma above) is $(1,2)$-nil, because $A\subset\mathfrak{so}(k_{1}+..+k_{r})$;
\item \emph{Berger $k$-manifolds} $N$. These have dimension $k$ and are
simply connected, locally irreducible and locally non-symmetric;
\item \emph{$k$-proper bundles} $P\rightarrow M$. These are such that
there exists an immersed Berger $k$-manifold $N\hookrightarrow M$
whose frame bundle is a subbundle of $P$, i.e, such that $FN\subset\imath^{*}P$.
The motivating examples are the frame bundle of a Berger $k$-manifold
and, more generally, the pullback of the frame bundle by the immersion
$f:N\rightarrow M$ of a Berger $k$-manifold.
\item \emph{torsion-free EHP theories}. By this we mean $A$-EHP funcional
(\ref{algebraic_EHP}) restricted to the subspace $\mathrm{Conn}_{A}^{\Theta=0}(P)\subset\mathrm{Conn}_{A}(P)$
of $A$-valued connections $\nabla=e+\omega$ such that $\Theta_{\omega}=0$.
\end{enumerate}

\section*{Restatement and Proof}

Let us now restate and proof Theorem B and analyze its fundamental
consequences.
\begin{thm}
\label{holonomy_EHP}Let $P\rightarrow M$ be a $k$-proper $H$-bundle
over a $n$-manifold $M$. If $\mathfrak{h}$ is a $k$-algebra, then
$k_{1}=k$ and $k_{i>1}=0$. Furthermore, for any FABS, the torsionless
EHP theory with values into $\mathbb{R}^{k}\rtimes\mathfrak{h}$ is
nontrivial only if one of the following conditions is satisfied 
\begin{enumerate}
\item[\textit{\emph{(B1)}}] $k=2,4$ and $M$ contains a K\"{a}hler Berger $k$-manifold; 
\item[\textit{\emph{(B2)}}] $k=4$ and $M$ contains a quaternionic-K\"{a}hler Berger $k$-manifold. 
\end{enumerate}
\end{thm}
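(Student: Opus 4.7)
The plan is to combine Theorem~\ref{theorem_G1_2} with the Berger classification of Riemannian holonomy, applied to the immersed Berger $k$-manifold $N\hookrightarrow M$ provided by the $k$-proper hypothesis. Since $FN\subset\imath^{*}P$, the torsion-free condition on the $\mathbb{R}^{k}\rtimes\mathfrak{h}$-connection $(e,\omega)$ restricts to a torsion-free connection on $FN$; by uniqueness of the Levi--Civita connection, this restriction coincides with the Levi--Civita connection of the Riemannian metric on $N$ induced from the tetrad. Hence the Riemannian holonomy algebra $\mathrm{hol}(N)$ lies inside $\mathfrak{h}\subset\mathfrak{so}(k_{1})\oplus\ldots\oplus\mathfrak{so}(k_{r})$.

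The first assertion follows from the local irreducibility of $N$ as a Berger manifold: its holonomy acts irreducibly on $T_{x}N\cong\mathbb{R}^{k}$, whereas an inclusion into a block sum with more than one non-zero summand would preserve the orthogonal decomposition of $\mathbb{R}^{\sum k_{i}}$, contradicting irreducibility. Hence exactly one $k_{i}$ is nonzero and, matching dimensions, equals $k_{1}=k$. For the main (non-triviality) assertion I would apply Theorem~\ref{theorem_G1_2} to $A=\mathbb{R}^{k}\rtimes\mathfrak{h}$. In the affine matrix realization of $A$ one has $\mathbb{R}^{k}\cdot\mathbb{R}^{k}=0$, so the tetrad component $A_{0}=\mathbb{R}^{k}$ is abelian, hence weak $(1,1)$-solvable in the sense of Section~\ref{sec_th_A}. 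Theorem~\ref{theorem_G1_2} then trivializes the torsionless EHP on any $n$-manifold with $n\geq 1+1+3=5$; since the immersion $N\hookrightarrow M$ gives $k\leq n$, non-triviality forces $k\leq n\leq 4$.

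It then remains to read off Berger's classification in low dimension. Among simply-connected, locally irreducible, locally non-symmetric Riemannian $k$-manifolds with $k\leq 4$ and proper holonomy in $\mathfrak{so}(k)$, the surviving cases are K\"{a}hler manifolds in dimensions $k=2$ and $k=4$ (holonomy $\mathfrak{u}(1)$ and $\mathfrak{u}(2)$) and quaternion-K\"{a}hler manifolds in dimension $k=4$ (self-dual Einstein, corresponding to the $\mathfrak{sp}(1)$ factor inside $\mathfrak{so}(4)$); the case $k=3$ is excluded since $\mathfrak{so}(3)$ admits no Berger proper reduction. This recovers precisely the dichotomy (B1)--(B2). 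The main obstacle I anticipate lies in verifying that the weak $(1,1)$-solvability of the affine tetrad component really does annihilate the EHP functional after passing through an arbitrary FABS and trace transformation --- in particular, one must ensure that the vanishing argument of Theorem~\ref{theorem_G1_2} is robust under all admissible choices of the integrand $\mathrm{tr}(\jmath(\alpha_{n,\Lambda}))$, and that the trace cannot a priori extract non-trivial pieces from outside the ideal $\mathfrak{I}(P;A)$.
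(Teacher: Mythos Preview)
Your overall architecture---bound the dimension via Theorem~\ref{theorem_G1_2}, pull the connection back to the immersed Berger manifold $N$, use torsion-freeness to identify it with a Levi--Civita connection, then apply de~Rham irreducibility and Berger's list---is exactly the paper's route, and the holonomy portion of your argument is essentially correct.

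The genuine gap is in your dimension bound. You claim that $A_{0}=\mathbb{R}^{k}$ is weak $(1,1)$-solvable because ``in the affine matrix realization one has $\mathbb{R}^{k}\cdot\mathbb{R}^{k}=0$.'' But the paper warns against precisely this move (see the Abstract Examples section): the product $*$ defining the \emph{concrete} EHP functional on $\mathbb{R}^{k}\rtimes\mathfrak{h}$ is \emph{not} the one under which the translational part squares to zero. If it were, already $e\wedge_{*}e=0$ and the standard Lorentzian EHP in dimension four would vanish identically, which is absurd. So your bound $n\leq 4$ is unjustified in this setting. The paper instead exploits that $\mathfrak{h}$, being a $k$-algebra (an associative subalgebra of $\mathfrak{so}(k_{1})\oplus\cdots\oplus\mathfrak{so}(k_{r})$), is $(1,2)$-nil, and then applies Lemma~\ref{pullback_lemma} to the splitting extension $0\to\mathbb{R}^{k}\to\mathbb{R}^{k}\rtimes\mathfrak{h}\to\mathfrak{h}\to 0$ to conclude that the whole algebra is $(1,2)$-nil. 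Theorem~\ref{theorem_G1_2} then yields triviality only for $n\geq 1+2+3=6$, i.e.\ $n\leq 5$; the case $k=5$ is afterwards discarded by Berger's list, recovering (B1)--(B2).

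Your closing worry about the FABS and trace is not the real obstacle: in the proof of Theorem~\ref{theorem_G1_2} the vanishing occurs already at the level of the Hilbert--Palatini form $\alpha_{n,\Lambda}$ (before $\jmath$ or $\mathrm{tr}$ is applied), so it is automatically robust under those choices. The step you need to repair is the nilpotency input, replacing the $(1,1)$ claim on $\mathbb{R}^{k}$ by the $(1,2)$ claim on $\mathfrak{h}$ transferred via Lemma~\ref{pullback_lemma}.
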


\begin{proof}
Since $\mathfrak{h}$ is a $k$-algebra, it is $(1,2)$-nil and, therefore,
the splitting extension $\mathbb{R}^{k}\rtimes\mathfrak{h}$ is $(1,2)$-nil
too. Consequently, by Theorem \ref{obstruction_1} the actual (and,
in particular, the torsionless) EHP is trivial if $n\geq6$, so that
we may assume $n<6$. Since $P\rightarrow M$ is $k$-proper, $M$
contains at least one immersed Berger $k$-manifold $\imath:N\hookrightarrow M$
such that $FN\subset\imath^{*}P$. Let $\kappa$ denote the inclusion
of $FN$ into $\imath^{*}P$. On the other hand, we also have an immersion
$\imath^{*}P\hookrightarrow P$, which we denote by $\imath$ too.
Lie algebra-valued forms can be pulled-back and the pullback preserves
horizontability and equivariance. So, for any $\nabla=e+\omega$ the
corresponding $1$-form $(\imath\circ\kappa)^{*}\omega\equiv\omega\vert_{N}$
is an $H$-connection in $FN$ whose holonomy is contained in $H$.
Since $H$ is the Lie integration of a $k$-algebra we have 
\[
H\subset SO(k_{1})\times...\times SO(k_{r})\subset SO(k).
\]
In particular, the holonomy of $\omega\vert_{N}$ is contained in
$SO(k)$, implying that $\omega\vert_{N}$ is compatible with some
Riemannian metric $g$ in $N$. But, we are working with torsion-free
connections, so that $\omega\vert_{N}$ is actually the Levi-Civita
connection of $g$ and, because $N$ is irreducible, de Rham decomposition
theorem implies that there exists $i\in1,...,r$ such that $k_{i}=k$
and $k_{j}=0$ for $j\neq i$. Without loss of generality we can take
$i=1$. Because $N$ is a Berger manifold, Berger's theorem \citep{berger_original}
applies, implying that the holonomy of $\omega\vert_{N}$ is classified,
giving conditions (B1) and (B2). 
\end{proof}
\begin{cor}
\label{corollary_holonomy}Let $M$ be a Berger $n$-manifold with
an $H$-structure, where $H$ is such that $\mathfrak{h}$ is a $n$-algebra.
In this case, for any FABS, the EHP theory with values into $\mathbb{R}^{n}\rtimes H$
is nontrivial only if $M$ has dimension $n=2,4$ and admits a K\"{a}hler
structure. 
\end{cor}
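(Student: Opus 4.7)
The plan is to specialize Theorem \ref{holonomy_EHP} to the tautological situation in which the ambient manifold is itself the immersed Berger manifold and $P = FM$. The corollary will then fall out after a short case analysis of Berger's holonomy list in dimension four.

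First I would verify that $FM \to M$ is an $n$-proper $H$-bundle in the sense of item (3) of Section \ref{sec_th_B}. Taking the immersion $\imath = \mathrm{id}_{M}$ and $N = M$, one has $\imath^{*} FM = FM$ and $FN = FM$, so the required inclusion $FN \subset \imath^{*}P$ is tautological; since $M$ is Berger of dimension $n$ by hypothesis, properness holds with $k = n$. Theorem \ref{holonomy_EHP} then forces $k_{1} = n$ and $k_{i>1} = 0$, and the nontriviality hypothesis (applied to the torsion-free restriction of the EHP functional) yields one of the two alternatives (B1) or (B2) for $M$ itself.

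It remains to collapse (B2) into (B1) in dimension $n = 4$. Condition (B1) already delivers $n = 2, 4$ with a K\"{a}hler structure on $M$, so only the genuinely quaternionic case needs to be ruled out. Here one uses the low-dimensional coincidence $Sp(1)\cdot Sp(1) = SO(4)$: the purely quaternionic-K\"{a}hler holonomy does not appear as a proper reduction in dimension four in Berger's list, so the only quaternionic possibility for an irreducible, non-symmetric $4$-manifold is the hyperk\"{a}hler one, whose holonomy is $Sp(1) = SU(2) \subset U(2)$. A hyperk\"{a}hler $4$-manifold is therefore in particular K\"{a}hler, and (B2) delivers the same K\"{a}hler conclusion as (B1).

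The main obstacle I anticipate is making the reduction of (B2) to (B1) rigorous rather than informal: one must pin down precisely what \emph{quaternionic-K\"{a}hler in dimension four} is supposed to mean in the statement (some authors exclude this dimension by convention, precisely because of $Sp(1)\cdot Sp(1) = SO(4)$), and then appeal uniformly to Berger's classification of irreducible non-symmetric holonomy representations, whose only $4$-dimensional entries beyond the generic $SO(4)$ are $U(2)$ and $SU(2) = Sp(1)$, both yielding K\"{a}hler structures. Everything else is a direct specialization of Theorem \ref{holonomy_EHP}.
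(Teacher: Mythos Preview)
Your proof is correct and follows the paper's approach: both specialize Theorem \ref{holonomy_EHP} to $P = FM$ with $N = M$ and $k = n$, then collapse case (B2) into (B1) via the low-dimensional coincidence $Sp(1)\cdot Sp(1) = SO(4)$. The paper phrases this last step simply as ``every orientable four-dimensional smooth manifold admits a quaternionic-K\"{a}hler structure'' (citing Besse and Salamon), whereas you unpack it explicitly through Berger's list; the underlying content is identical.
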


\begin{proof}
The result follows from the last theorem by considering the bundle
$P\rightarrow M$ as the frame bundle $FM\rightarrow M$ and from
the fact that every orientable four-dimensional smooth manifold admits
a quaternionic-K\"{a}hler structure \citep{einstein_manifolds_BESSE,quaternionic_Kahler_SALAMON}. 
\end{proof}
This corollary shows how topologically restrictive it is to internalize
\emph{torsionless} EHP in geometries other than Lorentzian. Indeed,
if the manifold $M$ is compact and 2-dimensional, then it must be
$\mathbb{S}^{2}$. On the other hand, in dimension $n=4$ compact
K\"{a}hler structures exist iff the Betti numbers $b_{1}(M)$ and
$b_{3}(M)$ are zero, so that $\chi(M)=b_{2}(M)+2$. As a consequence,
if we add the (mild) condition $H^{2}(M;\mathbb{R})\simeq0$ on the
hypothesis of Corollary \ref{corollary_holonomy} we conclude that
$M$ must be a K3-surface!

\section{Examples \label{examples}}

In this section we will give realizations of the obstruction theorems
studied previously. Due to the closeness with Lorentzian EHP (which
is where EHP theories are usually formulated), our focus is on concrete
situations, meaning that we will be focused in $A$-EHP functionals
for $A$ a $k$-algebra or an splitting extension of a $k$-algebra.
Even so, some abstract examples will also be given.

\section*{Concrete Examples\label{linear_examples}}

Let us start by considering $A$-EHP for $A$ an $\mathbb{R}$-algebra
endowed with a distinguished subalgebra $\overline{A}\subset A$.
We then have two vector space decompositions $A\simeq A_{0}\oplus A_{1}$:
one for which $A_{0}=\overline{A}$ and $A_{1}=A/\overline{A}$ and
another one for the opposite. From Theorem \ref{theorem_G1_2} and
the fact that $k$-algebras are $(1,2)$-nil we have the following
conclusions for $n\geq6$:
\begin{enumerate}
\item[(E1)]  if $\overline{A}$ is an ideal and $A/\overline{A}$ is a $k$-algebra,
then the EHP funcional relative to the opposite decomposition is trivial;
\item[(E2)] if $\overline{A}$ is a $k$-subalgebra, then the EHP functional
relative to the first decomposition is trivial; 
\item[(E3)] if $A$ is $k$-algebra, then EHP is trivial in both decompositions. 
\end{enumerate}
Conditions (E2) and (E3) are immediately satisfied for $k$-groups,
i.e, Lie subgroups of $SO(k_{1})\times...\times SO(k_{r})$, or, equivalently,
Lie subalgebras of $\mathfrak{so}(k_{1})\oplus...\oplus\mathfrak{so}(k_{r})$.
Indeed, if $G_{0}$ is a such subgroup, then the EHP theory for any
associative algebra algebra $A$ extending $\mathfrak{g}_{0}$ obviously
satisfies (E2), so that if $n\geq6$ the dual theories are trivial.
On the other hand, for any subalgebra $A_{0}\subset\mathfrak{g}_{0}$
the EHP theory with values into $\mathfrak{h}$ clearly satisfies
(E3) so that in dimension $n\geq6$ both the dual and the actual EHP
theory are trivial. Some obvious examples of subgroups of $k$-groups
$G_{0}\subset SO(k)$ are given in the table below. Except for $Spin(4)\hookrightarrow O(8)$,
which arises from the exceptional isomorphism $Spin(4)\simeq SU(2)\times SU(2)$,
all the other inclusions appear in Berger's classification \citep{berger_original}. 

Let us focus on (E2). We can think of each element of Table \ref{first_examples}
as included in some $GL(k;\mathbb{R})$, i.e, as a $G$-structure
on a manifold and then as a geometry. We could get more examples by
taking finite products of arbitrary elements in the table. In terms
of geometry, this can be interpreted as follows. Recall that a distribution
of dimension $k$ on an $n$-manifold can be regarded as a $G$-structure
for $G=GL(k)\times GL(n-k)$. Therefore, we can think of a product
$O(k)\times O(n-k)$ as a distribution of Riemannian leaves on a Riemannian
manifold, $U(k)\times U(n-k)$ as a hermitan distribution, and so
on.
\begin{table}[H]
\begin{centering}
\begin{tabular}{|c|c|c|c|c|c|}
\hline 
$(n,k)$  & $(1,k)$  & $(2,k)$  & $(4,k)$  & $(1,7)$  & $(1,8)$\tabularnewline
\hline 
$H\subset O(n\cdot k)$  & $SO(k)$  & $U(k)$  & $Sp(k)$  & $G_{2}$  & $Spin(4)$\tabularnewline
\hline 
\multicolumn{1}{c|}{} & $U(k)$  & $SU(k)$  & $Sp(k)\cdot Sp(1)$  &  & $Spin(7)$\tabularnewline
\cline{2-6} 
\end{tabular}
\par\end{centering}
\caption{\label{first_examples}First examples of concrete geometric obstructions}
\end{table}

Other special cases where condition (E2) applies are in the table
below. In the first line, $O(k,k)$ is the so-called \emph{Narain
group} \citep{narain_group}, i.e, the orthogonal group of a metric
with signature $(k,k)$, whose maximal compact subgroup is $O(k)\times O(k)$.
Second line follows from an inclusion similar to $U(k)\hookrightarrow O(2k)$,
first studied by Hitchin and Gualtieri \citep{generalized_complex_PRIMEIRO,generalized _complex_thesis,generalized_complex_2},
while the remaining lines are particular cases of the previous ones.
The underlying flavors of geometry arose from the study of Type II
gravity and Type II string theory \citep{generalized_geometry_TYPE_II,generalized_geometry_TYPE_1}.
That condition (E2) applies for the second and third lines follows
from the fact that by complexifying $U(k,k)\hookrightarrow O(2k,2k)$
we obtain $U(k,k)\hookrightarrow O(4k;\mathbb{C})$. 
\begin{table}[H]
\begin{centering}
\begin{tabular}{|c|c|c|}
\hline 
$G$  & $G_{0}$  & $\text{geometry}$\tabularnewline
\hline 
\hline 
$O(k,k)$  & $O(k)\times O(k)$  & Type II\tabularnewline
\hline 
$O(2k,2k)$  & $U(k,k)$  & Generalized Complex\tabularnewline
\hline 
$O(2k,2k)$  & $SU(k,k)$  & Generalized Calabi-Yau\tabularnewline
\hline 
$O(2k,2k)$  & $U(k)\times U(k)$  & Generalized K\"{a}hler\tabularnewline
\hline 
$O(2k,2k)$  & $SU(k)\times SU(k)$  & Generalized Calabi\tabularnewline
\hline 
\end{tabular}
\par\end{centering}
\caption{\label{more_examples}More examples of concrete geometric obstructions}
\end{table}

About these two tables, some remarks: 
\begin{enumerate}
\item Table \ref{first_examples} contains any classical flavors of geometry,
except symplectic and complex. The reason is that $\operatorname{Sp}(k;\mathbb{R})$
and $GL(k;\mathbb{C})$ are not contained in some $O(r)$. But this
does not mean that condition (E2) is not satisfied by those geometries
Indeed, generalized complex geometry contains both of them \citep{generalized _complex_thesis},
so that they actually fulfill (E2). 
\item We could create a third table with ``exotic $k$-groups'', meaning
that a priori they are not related to any classical geometry, so that
they describe some kind of ``exotic geometry''. For instance, in
\citep{more_subgroups_O(n)} all Lie subgroups $G_{0}\subset O(k)$
satisfying 
\[
\frac{(k-3)(k-4)}{2}+6<\dim H<\frac{(k-1)(k-2)}{2}
\]
were classified and in arbitrary dimension $k$ there are fifteen
families of them. Other exotic (rather canonical, in some sense) subgroups
that we could add are maximal tori. Indeed, both $O(2k)$ and $U(k)$
have maximal tori, say denoted by $T_{O}$ and $T_{U}$, so that the
reductions $T_{O}\hookrightarrow O(2k)$ and $T_{U}\hookrightarrow U(k)$
will satisfy condition (E2). More examples of exotic subgroups to
be added are the \emph{point groups}, i.e, $H\subset\operatorname{Iso}(\mathbb{R}^{k})$
fixing at least one point. Without loss of generality we can assume
that this point is the origin, so that $H\subset O(k)$. Here we have
the symmetric group of any spherically symmetric object in $\mathbb{R}^{k}$,
such as regular polyhedrons and graphs embedded on $\mathbb{S}^{k-1}$. 
\item If we are interested only in condition (E2), then the tables above
can be enlarged by including embeddings of $O(k_{1})\times...\times O(k_{r})$
into some other larger group $\tilde{G}$. In this condition it only
matters that $H$ is a $k$-group. Particularly, $O(k)$ admits some
exceptional embeddings (which arise from the classification of simple
Lie algebras), as in the table below. 
\begin{table}[H]
\begin{centering}
\begin{tabular}{|c|c|c|c|c|c|}
\hline 
$k$  & 3  & 9  & 10  & 12  & 16\tabularnewline
\hline 
$O(k)\hookrightarrow$  & $G_{2}$  & $F_{4}$  & $E_{6}$  & $E_{7}$  & $E_{8}$\tabularnewline
\hline 
\end{tabular}
\par\end{centering}
\caption{\label{exceptional_embed} Exceptional embbedings of orthogonal groups.}
\end{table}
\item Despite (E2) making sense sense for arbitrary real algebras $A$ with
distinguished $k$-subalgebra $A_{0}\subset A$, the examples above
were all for matrix algebras underlying Lie algebras of matrix Lie
groups. But, as pointed out previously, in order to get (E2) we only
need that $A_{0}$ be a $k$-algebras, meaning that \textbf{any} extension
(not necessarily a matrix extension) will satisfy (E2). In particular,
for every $G_{0}$ in the above tables $\mathbb{R}^{k}\rtimes\mathfrak{g}_{0}$
satisfies this condition. Recall that EHP theories with values in
those algebras are immediate generalizations of Lorentzian EHP theory.
Therefore, what we showed is that the EHP functional cannot be realized
(in dimension $n\geq6$) in essentially all flavors of geometry other
then semi-Riemannian with specific signature. 
\item Condition (E2) is on the Lie algebra level. A Lie group and its universal
covering have the same Lie algebra. Therefore, we can double the size
of our current tables by adding the universal cover (when it exists)
of each group. For instance, in Table \ref{first_examples} the spin
groups $\operatorname{Spin}(4)$ and $\operatorname{Spin}(7)$ were
added due to exceptional isomorphisms on the level of Lie groups.
Now, noticing that for $k>2$ (in particular for $k=4,7$) $\operatorname{Spin}(k)$
is the universal cover of $SO(k)$, we can automatically add all spin
groups to our table. We can also add the universal covering of the
symplectic group $Sp(k;\mathbb{R})$, the metapletic group.
\item Due to the above discussion, we can use Tables \ref{first_examples}
and \ref{more_examples}, as well as the ``table of exotic $k$-groups'',
as a source of examples for condition (E3). Indeed, we can just consider
the algebra $A$ in condition (E3) as the algebra of some $H$ in
the mentioned tables and take a fixed subalgebra $A_{0}\subset\mathfrak{g}$.
This produces a long list of examples, because $G/G_{0}$ is a priori
an arbitrary homogeneous space subjected only to the condition of
$G$ being a $k$-group (here $G_{0}$ is the integrated Lie group
of $A_{0}$). In geometric terms, if a geometry fulfills condition
(E2), then any induced ``homogeneous geometry'' fulfills (E3). On
the other hand, differently from what happens with condition (E2),
Table \ref{exceptional_embed} \emph{cannot} be used to get more examples
of condition (E3). Indeed, if $G_{0}\hookrightarrow G$ is a reduction
fulfilling (E3) and $G\hookrightarrow\tilde{G}$ is an embedding,
then $G_{0}\hookrightarrow\tilde{G}$ fulfills (E3) iff it is fulfilled
by $G\hookrightarrow\tilde{G}$ (see diagram below).$$
\xymatrix{H \ar@/_{0.6cm}/[rrr]_{} \ar@{^(->}[r] & G \ar@{^(->}[r] & O(k_{1})\times...\times O(k_{r}) \ar@{^(->}[r] & \tilde{G}  }
$$ 
\end{enumerate}
Let us now analyze condition (E1). First of all we notice that it
is very restrictive, because we need to work with subalgebras $A_{0}\subset A$
of $k$-algebras which are ideals. In terms of Lie groups this means
that we need to restrict to normal subgroups $G_{0}\subset G$ such
that the quotient $G/G_{0}$ is a $k$-group. For instance, in the
typical situations above, $G_{0}$ is not normal. Even so, there are
two dual conditions under which $G_{0}\subset G$ fulfills (E1): when
$G$ is a $k$-group with $G/G_{0}\subset G$ and, dually, when $G_{0}$
is $k$-group with $G/G_{0}\subset G_{0}$. However, in both situations
we are also in (E3), so that there is nothing much new here.

\section*{Cayley-Dickson Examples}

Here we will show that the list of concrete examples can be extended
indefinitely, allowing us to conclude that EHP theories are trivial
in an infinite number of geometries. This will be done by making use
of the Cayley-Dickson construction. Let $(A,*)$ be an arbitrary $\mathbb{R}$-algebra
with involution $\overline{(-)}:A\rightarrow A$. For each fixed $l>0$
we get a sesquilinear map $s:A^{l}\times A^{l}\rightarrow A$ such
that $s(x,y)=\sum_{i}\overline{x}_{i}*y_{i}$. Consider the subspace
of all $l\times l$ matrices $M\in\operatorname{Mat}_{l\times l}(A)$
with coefficients in $A$ which preserve $s$, in the sense that $s(Mx,My)=s(x,y).$
We say that these are the \emph{unitary matrices in $A$, }and we
denote this set by $U(k;A)$. If the involution is trivial (i.e, the
identity map), then call them the \emph{orthogonal matrices} \emph{in
$A$}, writing $O(k;A)$ to denote the corresponding space. 

Recall that Cayley-Dickson takes an involutive algebra $A$ and gives
another involutive algebra $\operatorname{CD}(A)$ with weakened PIs
\citep{Cayley_Dickson_2}. As an $\mathbb{R}$-module, the newer algebra
is given by a sum $A\oplus A$ of real and imaginary parts. We have
an inclusion $A\hookrightarrow\operatorname{CD}(A)$, obtained by
regarding $A$ as the real part, which induces an inclusion into the
corresponding unitary groups $U(k;A)\hookrightarrow U(k;\operatorname{CD}(A))$.
This inclusion can be extended in the following way 
\[
U(k;A)\hookrightarrow U(k;\operatorname{CD}(A))\hookrightarrow U(2k;A),
\]
defined by setting the real part and the imaginary part as diagonal
block matrices. Iterating we see that for every $k$ and every $l$
there exists an inclusion 
\begin{equation}
U(k;\operatorname{CD}^{l}(A))\hookrightarrow U(2^{l}k;A).\label{inclusion_unitary}
\end{equation}

We assert that, at least when the starting algebra $A$ is finite-dimensional,
each $U(k;\operatorname{CD}^{l}(A))$ is a Lie group. Taking $A=\mathbb{R}$
in the inclusion above we will then get a corresponding sequence of
matrix Lie algebras (and, therefore, a sequence of underlying associative
algebras). Consequently, each of such algebras will satisfy condition
(E2), meaning that, for every $k>0$ and $l\ge0$, the EHP theories
with values into extensions of $\mathfrak{u}(k;\operatorname{CD}^{l}(A))$
are trivial.
\[
\xymatrix{U(k;\operatorname{CD}^{l}(\mathbb{R}))\ar[r] & \cdots\ar[r] & U(2^{l-2}k;\mathbb{H})\ar[r] & U(2^{l-1}k;\mathbb{C})\ar[r] & O(2^{l}k)}
.
\]

The idea for proving the assertion is to reproduce the proof that
the standard unitary/orthogonal groups $U(k)$ and $O(k)$ are Lie
groups. The starting point is to notice that the involution of $A$
induces an involution $(-)^{\dagger}$ in $\operatorname{Mat}_{k\times k}(A)$,
defined by composition with the transposition map. The next step is
to see that $k\times k$ matrices preserve the involution $s$ iff
they are invertible with $M^{-1}=M^{\dagger}$, i.e, iff $M\cdot M^{\dagger}=1_{k}=M^{\dagger}\cdot M$.
Since $\overline{(-)}:A\rightarrow A$ is an algebra morphism, we
have the usual property $(M\cdot N)^{\dagger}=M^{\dagger}\cdot N^{\dagger}$,
allowing us to characterize the unitary matrices as those satisfying
$M\cdot M^{\dagger}=1_{k}$. Therefore, defining the map 
\[
f:\operatorname{Mat}_{k\times k}(A)\rightarrow\operatorname{Mat}_{k\times k}(A)\quad\text{by}\quad f(M)=M\cdot M^{\dagger},
\]
in order to proof that $U(k;A)$ is a Lie group it is enough to verify
that the above map is, in some sense, a submersion (which will give
the smooth structure) and that the multiplication and inversion maps
are smooth. When $A$ is finite-dimensional this is immediate (although
this requeriment is not necessary).

Since classical geometries (meaning geometries in the sense of Weil)
are described by inclusions of Lie groups, this leads us to think
of (\ref{inclusion_unitary}) as modeling some flavor of geometry,
say the \emph{Cayley-Dickson geometry in $A$, of order }$l$\emph{.
}With this nomenclature we have concluded that EHP theories are trivial
in real Cayley-Dickson geometries of arbitrary order, generalizing
the examples in Table \ref{first_examples}. In the following we will
show that Table \ref{more_examples} can be generalized as well. In
fact we will show that there is a notion of \emph{generalized Cayley-Dickson
geometry in $A$ }which, for $A=\mathbb{R}$, contains Hitchin's generalized
complex geometry and in which EHP theories are trivial. This follows
directly from the fact that for every $A$ we have the inclusion $U(k,k;\operatorname{CD}(A))\subset O(2k;2k,A)$,
so that by iteration we get 
\begin{equation}
U(k,k;\operatorname{CD}^{l}(A))\subset O(2^{l}k;2^{l}k,A).\label{generalized_CD}
\end{equation}

For $A=\mathbb{R}$ and $l=1$ this model generalized complex geometry,
leading us to say that the inclusion above models \emph{generalized
Cayley-Dickson geometry in $A$ of degree $l$}. For instance, if
we take $A=\mathbb{R}$ and $l=2$ this becomes generalized quaternionic
geometry, which is a poorly studied theory, beginning with the works
\citep{generalized_quaternionic_INITIAL,generalized_quaternionic_1}
. For $l=3,4,...$ it should be generalized octonionic geometry, generalized
sedenionic geometry, and so on. The authors are unaware of the existence
of substantial works on these theories.

When tensoring inclusion (\ref{generalized_CD}) with $\operatorname{CD}^{l}(A)$
we get 
\[
U(k,k;\operatorname{CD}^{l}(A))\subset O(2^{l}k;2^{l}k,\operatorname{CD}^{l}(A))\simeq O(2^{l}k+2^{l}k;\operatorname{CD}^{l}(A)),
\]
so that condition (E2) above implies that, independently of the present
development of generalized Cayley-Dickson geometries, EHP theory is
trivial in each of them.

\section*{Abstract Examples}

Here we give abstract examples of algebras in which EHP cannot be
realized. By abstract we mean that they are not necessarily $k$-algebras.
We will restrict to the cases in which only $A$ is $\mathfrak{m}$-graded.
From Theorem \ref{theorem_G1_2} we have that:
\begin{enumerate}
\item[(F1)] \emph{\label{condition_F1}}when $A$ admits a vector space decomposition
$A\simeq A_{0}\oplus A_{1}$, where $A_{0}$ is a subalgebra such
that each $A_{0}^{m}=A_{0}\cap A^{m}$ is a weak $(k_{m},s_{m})$-solvable
subspace, then $A$-EHP is trivial in dimension $n\geq k+s+3$, where
$(k,s)$ is the minimum over $(k_{m},s_{m})$. 
\end{enumerate}
The most basic examples are those for $\mathfrak{m}=0$ and $A_{1}=0$,
i.e, the non-graded setting with $A$ itself $(k,s)$-solv. For instance,
any Lie algebra is $(k,1)$-nil for any even $k$. As a consequence,\emph{
in dimension $n\geq4$, any Lie algebra valued EHP functional is trivial.
}In particular, EHP with values in the Poincaré group $\mathfrak{iso}(n-1,1)$
is trivial. We emphasize that this \textbf{does not} mean that the
standard Lorentzian EHP theory is trivial (which would be absurd).
Indeed, while Lorentzian EHP theory and $\mathfrak{iso}(n-1,1)$-valued
EHP theory take values in the same \textbf{vector space}, and their
action functionals have the same shape, the \textbf{algebras} (and,
therefore, their properties) used to define the corresponding wedge
product are \emph{totally different}. 

If we now allow $\mathfrak{m}$ to be nontrivial, but with $A_{1}=0$,
then condition (F1) is satisfied if each $A^{m}$ is a weak $(k_{m},s_{m})$-solvable
subespace. In particular, it remains satisfied if $A$ is itself $(k,s)$-solv.
It happens that not only Lie algebras are solv $(2,1)$-nil, but also
graded Lie algebras. Consequently, EHP theories are also trivial in
the domain of graded Lie algebras. One can generalize even more thinking
of EHP theories with values in Lie superalgebras and in graded Lie
superalgebras. Indeed, recall that a \emph{$\mathfrak{m}$-graded
Lie superalgebra $\mathfrak{g}$} is just a $\mathfrak{m}$-graded
Lie algebra whose underlying PIs (i.e, skew-commutativity and Jacobi
identity) hold in the graded sense. Particularly, this means that
the $\mathbb{Z}_{2}$-grading writes $\mathfrak{g}\simeq\mathfrak{g}^{0}\oplus\mathfrak{g}^{1}$,
with $\mathfrak{g}^{0}$ a $\mathfrak{m}$-graded Lie algebra and,
therefore, $(2,1)$-nil. It then follows that $\mathfrak{g}$ satisfies
condition (F1). Summarizing, in dimension $n\geq6$,\emph{ the} \emph{EHP
functional cannot be realized in any Lie algebraic context}.

Graded Lie superalgebras are the first examples of algebras satisfying
(F1) with $A_{1}\neq0$, but they are far from being the only one.
Indeed, when we look at a decomposition $A\simeq A_{0}\oplus A_{1}$,
where $A_{0}$ is a subalgebra, it is inevitable to think of $A$
as an extension of $A_{1}$ by $A_{0}$, meaning that we have an exact
sequence as shown below. If $A_{0}$ is $(k,s)$-solv, then (F1) holds.
This can be interpreted as follows: \emph{ }suppose that we encountered
an algebra $A_{1}$ such that EHP is not trivial there. So, EHP will
be trivial in any (splitting) extension of $A_{1}$ by a $(k,s)$-solv
algebra.
\[
\xymatrix{0\ar[r] & A_{0}\ar[r] & A\ar[r] & A_{1}\ar[r] & 0}
\]

In particular, because $(\mathbb{R}^{k},+)$ is abelian and, therefore,
$(1,1)$-nil, any algebra extension by $\mathbb{R}^{k}$ will produce
a context in which EHP is trivial. Recall the previous concrete examples,
in which we considered splitting extensions of matrix algebras by
$\mathbb{R}^{k}$. This would seem to suggest that every concrete
EHP theory is trivial, which is not true. Once again the problem is
in the wedge products: we have the same vector space structures, but
in the concrete contexts and here the way of multiplying forms with
values in those spaces are different. The main difference is that
the product used there does not take the abelian structure of $\mathbb{R}^{k}$
into account. 

Another example about extensions is as follows: the paragraph above
shows that abelian extensions of nontrivial EHP are trivial, but what
about super extensions? They remain trivial. Indeed, being a Lie superalgebra,
the translational superalgebra $\mathbb{R}^{k\vert l}$ (the cartesian
superspace, regarded as a Lie superalgebra) is $(2,1)$-nil, so that
any algebra extension by it is trivial. So, if we think on supergravity
as a realization of gravity not in the domain of Cartan connections
but in the domain of super Cartan connections on the super Poincaré
Lie algebra $\mathfrak{siso}(n-1,1)$ we find that EHP functional
is not a good model, because it will be trivial if $n\geq6$.

Finally, there are also purely abstract examples. Just to mention,
in \citep{exotic_nil_1} the authors build ``mathematically exotic''
examples of nilpotent algebras, which fulfill (F1). In \citep{maximal_tori}
necessary and sufficient conditions were given under which arbitrary
orthogonal groups $O(q,\mathbb{K})$, where $q:V\rightarrow\mathbb{K}$
is a positive-definite quadratic form on a finite-dimension $\mathbb{K}$-space
$q$ and $\operatorname{ch}(\mathbb{K})\neq2$, admit an embedded
maximal torus $\mathbf{T}(q;\mathbb{K})$. Independently of the quadratic
space $(V,q)$, the corresponding orthogonal group is a Lie group
and, as in the real case, the algebra $\mathfrak{o}(q;\mathbb{K})$
is $(2,1)$-nil. Therefore, under the conditions of \citep{maximal_tori},
the EHP functional will be trivial in those toroidal geometries. 

\section{Appendix \label{appendix}}

Here we will introduce a general tool that allows us to pushforward
algebra-valued forms on the total space $P$ of a bundle $P\rightarrow M$
to the base manifold $M$. The first step is to build some process
allowing us to replace $A$-valued forms in $P$ by forms in $M$
with coefficients in some other bundle, say $E_{A}$. It is more convenient
to think of this in categorical terms. Let $\mathbf{Alg}_{R}$ be
the category of $R$-algebras, $\mathbb{Z}\mathbf{Alg}_{R}$ be the
category of $\mathbb{Z}$-graded $R$-algebras and, given a manifold
$M$, let $\mathbf{Bun}_{M}$ and $\mathbf{Alg_{\text{\ensuremath{R}}}Bun}_{M}$
denote the categories of bundles and of $R$-algebra bundles over
$M$, respectively. A \emph{functorial algebra bundle system }(FABS)
consists of 
\begin{enumerate}
\item a subcategory $\mathbf{C}\subset\mathbf{Bun}_{M}\times\mathbb{Z}\mathbf{Alg}_{R}$; 
\item a functor $E_{-}$ assigning to any algebra $A\in\mathbf{C}$ a corresponding
algebra bundle $E_{A}\rightarrow M$ whose typical fiber is $A$; 
\item a functor $S(-;-)$ that associates an algebra to each pair $(P,A)\in\mathbf{C}$; 
\item natural transformations $\imath:S(-;-)\Rightarrow\Lambda(-;-)$ and
$\xi:S(-;-)\Rightarrow\Lambda(M;-)$ such that $\imath$ is objectwise
injective.
\end{enumerate}
Now that we know how to replace $A$-valued forms in $P$ with $E_{A}$-valued
forms in $M$, let us see how to transfer the latter to classical
$\mathbb{R}$-valued forms in $M$. This is done by taking some ``trace''.
A \emph{trace} compatible with a FABS is given by a functor $\operatorname{tr}:\mathbb{Z}\mathbf{Alg}_{R}\rightarrow\mathbb{Z}\mathbf{Alg}_{\mathbb{R}}$
together with a natural transformation $\tau$ between $\operatorname{tr}\circ\Lambda(M;E_{-})$
and the constant functor at $\Lambda(M;\mathbb{R})$. All this data
fits in the following diagram:$$
\xymatrix{ \mathbf{Alg _{R}} \ar@{-->}[rr]^{E_{-}} && \mathbf{Alg _{R} Bun}_{M} \ar[d]^{\Lambda (M;-)} \ar@{<==}[ld]+<18pt>_-{\jmath} \\
\ar[u]^{\pi _1}  \mathbf{C} \ar@{-->}@/^{0.4cm}/[rr]^(.4){S(-,-)} \ar@/_{0.4cm}/[rr]_-{^{\Lambda (-;-)}} & \ar@{==>}[]+<0.8cm,-0.5cm><0.3cm>_{\imath} &\mathbb{Z} \mathbf{Alg _{R}}}
$$ 

Such systems exist, as shown in the next example. The fundamental
properties and constructions involving FABS will appear in a work
under preparation \citep{FABS}. 
\begin{example}[\emph{standard case}]
Let $\mathbf{C}$ be composed by pairs $(P,A)$, where $P\rightarrow M$
is a $G$-bundle whose group $G$ becomes endowed with a representation
$\rho:G\rightarrow GL(A)$. In that case we define $E_{-}$ as the
rule assigning to each $A$ the corresponding associated bundle $P\times_{\rho}A$.
The functor $S$ is such that $S(P,A)$ is the algebra $\Lambda_{\rho}(P;A)$
of \emph{$\rho$-equivariant $A$-valued forms $\alpha$ in }$P$,
i.e, of those which satisfy the equation $R_{g}^{*}\alpha=\rho(g^{-1})\cdot\alpha$,
where $R:G\times P\rightarrow P$ denotes the canonical free action
characterizing $P$ as a principal $G$-bundle. This algebra of $\rho$-equivariant
forms naturally embeds into $\Lambda(P;A)$, giving $\imath$. Finally,
it is a standard fact \citep{kobayashi} that each $\rho$-equivariant
$A$-valued form on $P$ induces an $P\times_{\rho}A$-valued form
on $M$, defining the transformation $\xi$. This is the standard
approach used in the literature, so that we will refer to it as the
\emph{standard FABS}. 
\end{example}

In order to define tghe EHP action functional we need to consider
invariant FABS. Let us introduce them. Indeed, we say that a FABS
on $\mathbf{C}$ compatible with a trace $(\operatorname{tr},\tau)$
is \emph{invariant }under a functor $I:\mathbf{C}\rightarrow\mathbb{Z}\mathbf{Alg}_{R}$
if
\begin{enumerate}
\item[(a)] for all $(P,A)\in\mathbf{C}$ the corresponding $I(P,A)$ is an ideal
of $S(P;A)$, so that we can take the quotient functor $S/I$ and
we have a natural transformation $\pi:S\Rightarrow S/I$;
\item[(b)] there exists another $J:\mathbf{C}\rightarrow\mathbb{Z}\mathbf{Alg}_{\mathbb{R}}$
such that $J(P,A)$ is an ideal of $\Lambda(M;E_{A})$ and whose projection
we denote by $\pi'$;
\item[(c)] there exists a natural transformation $\jmath':S/I\Rightarrow\Lambda(M;-)/J$
such that $\jmath'\circ\pi=\pi'\circ\jmath$, i.e, the diagram below
commutes for every $(P,A)$;\begin{equation}{ \label{invariant_FABS}
\xymatrix{\ar[d]_-{\pi _{(P,A)}} S(P,A)\ar[r]^-{\jmath _{(P,A)}} & \Lambda (M;E_A) \ar[d]^-{\pi ' _{(P,A)}} \\
S(P,A)/I(P,A) \ar[r]_{\jmath ' _{(P,A)}} & \Lambda (M;E_A)/J(P,A)}}
\end{equation}
\item[(d)]  the transformation $\tau$ passes to the quotient by $\tau\circ I$.
\end{enumerate}
Finally, notice that all these definitions make perfect sense in the
more general category $\mathfrak{m}\mathbf{Alg}_{R}$ of $\mathfrak{m}$-graded
$R$-algebras.

\section*{Acknowledgments}

Yuri X. Martins was supported by Capes. Both authors would like to
thank Romero Solha and Helvecio Fargnoli for fruitful discussions
and criticism of earlier drafts of the text.

\section*{References}


\begin{thebibliography}{99}
\bibitem{generalized_Einstein} Akbar-Zadeh, H.,\emph{ Generalized
Einstein manifolds,} J. of Geometry and Physics, v. 17, i. 4, 1995,
p. 342-380

\bibitem{ashteckar} Ashtekar, A.\emph{,} \emph{New Variables for
Classical and Quantum Gravity}. Physical Review Letters. 57 (18):
(1986), p. 2244\textendash 2247 

\bibitem{quantum_gravity_2} Baez, J. C., \emph{Degenerate solutions
of general relativity from topological field theory}, Comm. Math.
Phys. 193 (1998), no. 1, 219\textendash 231.

\bibitem{baez_knots} Baez, J. C., Muniain, J. P., Gauge Fields, Knots
and Gravity, World Scientific, 1994.

\bibitem{maximal_tori} Bayer-Fluckiger, E., \emph{Embeddings of maximal
tori in orthogonal groups}, Annales de l'institut Fourier, 64 no.
1 (2014), p. 113-125

\bibitem{berger_original} Berger M., \emph{Sur les groupes d'holonomie
homogènes de variétés à conexion affine et des variétés riemanniennes},
Bull. Soc. Math. France 283 (1955), 279\textendash 330.

\bibitem{einstein_manifolds_BESSE} Besse, A. L., \emph{Einstein Manifolds},
Berlin Heidelberg, 1987

\bibitem{symplectic_curvature_1} Bourgeois, F., Cahen, M., \emph{A
variational principle for symplectic connections}, J. of Geometry
and Physics, V. 30, I. 3, 1999, P. 233-265 

\bibitem{symplectic_curvature_2} Cahen, M., Gutt, S, Rawnsley, J.,
\emph{Symplectic connections with parallel Ricci tensor}, Banach Center
Publications 51.1 (2000): 31-41

\bibitem{quasi_einstein} Chaki, M. C., Maity, R. K., \emph{On quasi
Einstein manifolds}, Publ. Math. Debrecen 57 (2000), 297-306.

\bibitem{generalized_quasi_Einstein}Chaki, M. C., \emph{On generalized
quasi-Einstein manifolds}, Publ. Math. Debrecen, 58 (2001), 683\textendash 691

\bibitem{generalized_geometry_TYPE_II} Coimbra, A., Strickland-Constable,
C. and Waldram D., \emph{Supergravity as Generalised Geometry I: Type
II Theories}, arXiv:1107.1733.

\bibitem{generalized_quaternionic_1} Deschamps, G., \emph{Twistor
space of a generalized quaternionic manifold}, arXiv:1601.03882

\bibitem{complex_GR} Esposito, G., \emph{Complex General Relativity},
Springer Netherlands, 2002.

\bibitem{generalized _complex_thesis} Gualtieri, M, \emph{Generalized
Complex Geometry}, arXiv:math/0401221

\bibitem{hawking_ellis} Hawking, S., G. Ellis, \emph{The Large Scale
Structure of Space-Time}, Cambridge University Press, 1973 

\bibitem{generalized_complex_PRIMEIRO} Hitchin, N., \emph{Generalized
Calabi\textendash Yau Manifolds}, The Quarterly Journal of Mathematics,
Volume 54, Issue 3, 1 September 2003, Pages 281\textendash 308

\bibitem{generalized_complex_2} Hitchin, N., \emph{Lectures on generalized
geometry}, arXiv:1008.0973

\bibitem{quantum_gravity_1} Horowitz, G. T., \emph{Topology change
in classical and quantum gravity}, 1991, Class. Quantum Grav. 8 587

\bibitem{generalized_geometry_TYPE_1} Hull, C. M., \emph{Generalised
Geometry for M-Theory}, arXiv:hep-th/0701203

\bibitem{kobayashi}Kobayashi, S.,Nomizu, K., \emph{Foundations of
Differential Geometry, Vol.1}, Wiley-Interscience, 1996.

\bibitem{FABS} Martins, Y., X., Biezuner, R. J., \emph{Functorial
Algebra Bundle Systems}, in preparation.

\bibitem{narain_group} Narain, K. S., Sarmadi, M. H., Witten, E.,
\emph{A note on toroidal compactification of heterotic string theory},
Nuclear Physics B, Volume 279, Issues 3\textendash 4, 12 January 1987,
Pages 369-379

\bibitem{super_quasi_generalized} Özgür, C., \emph{On some classes
of super quasi-Einstein manifolds}, Chaos, Solitons and Fractals,
40 (2009) 1156\textendash 1161

\bibitem{more_subgroups_O(n)} Obata, M., \emph{On Subgroups of the
Orthogonal Group}, Transactions of the American Mathematical Society,
Vol. 87, No. 2 (Mar., 1958), pp. 347-358

\bibitem{generalized_quaternionic_INITIAL} Pantilie, R., \emph{Generalized
Quaternionic Manifolds}, arXiv:1109.6475

\bibitem{quaternionic_Kahler_SALAMON} Salamon, S., \emph{Quaternionic
K\"{a}hler Manifolds}, Inventiones mathematicae 67 (1982): 143-172

\bibitem{review_einstein}Sambusetti A., \emph{Einstein manifolds
and obstructions to the existence of Einstein metrics}, Recent developments
on Differential Geometry, Rendiconti di Matematica, serie VII, Vol.
18, 131-149 (1998)

\bibitem{Cayley_Dickson_2} Schafer, R. D., \emph{On the algebras
formed by the Cayley-Dickson process}, American Journal of Mathematics
76: 435\textendash 46 (1954)

\bibitem{exotic_nil_1} Smoktunowicz, A., \emph{Infinitely dimensional,
affine nil algebras} $A\otimes A^{op}$ \emph{and} $A\otimes A$ \emph{exist},
arXiv:1403.2557.

\bibitem{connections_graded_principal_bundles} Stavracou, T., \emph{Theory
of connections on graded principal bundles}, arXiv:dg-ga/9605006.

\bibitem{wald} Wald, R., \emph{General Relativity}, University of
Chicago Press, 1984

\bibitem{PhD_cartan_connections}Wise, D. K., \emph{Topological Gauge
Theory, Cartan Geometry, and Gravity}, PhD thesis, UCLA, Riverside.
\end{thebibliography}
\end{document}